\newtheorem*{myclaim}{Claim}
\newtheorem*{mythm}{Theorem}
\begin{document}
\title{A Makespan Lower Bound for the Scheduling of the Tiled Cholesky Factorization based on ALAP scheduling}

\author{
\IEEEauthorblockN{Willy Quach}
\IEEEauthorblockA{\'Ecole Normale Sup\'erieure de Lyon\\
Lyon, France\\
Email: willy.quach@ens-lyon.fr}
\and
\IEEEauthorblockN{Julien Langou}
\IEEEauthorblockA{University of Colorado Denver\\
Denver, Colorado, USA\\
Email: julien.langou@ucdenver.edu}
}

\maketitle

\begin{abstract}

Due to the advent of multicore architectures and massive parallelism, the tiled
Cholesky factorization algorithm has recently received plenty of attention and
is often referenced by practitioners as a case study.  It is also
implemented in mainstream dense linear algebra 
libraries.  However, we note that theoretical study of the parallelism of this
algorithm is currently lacking.  In this paper, we present new theoretical
results about the tiled Cholesky factorization in the context of a parallel
homogeneous model without communication costs.  We use standard flop-based
weights for the tasks.  For a $t$-by-$t$ matrix, we know that the critical path of the tiled
Cholesky algorithm is $9t-10$ and that the weight of all tasks is $t^3$. In this context, we prove
that no schedule with less than $0.185 t^2$ processing units can finish in a
time less than the critical path. In perspective, a naive bound gives $0.11 t^2.$
We then give a schedule which needs less than $0.25 t^2+0.16t+3$ processing units to
complete in the time of the critical path.  In perspective, a naive schedule gives $0.50 t^2.$
In addition, given a fixed number of 
processing units, $p$, we give a lower bound on the execution time as follows:
$$\max( \frac{t^{3}}{p}, 
\frac{t^{3}}{p} - 3\frac{t^2}{p} + 6\sqrt{2p} - 7 ,
9t-10).$$
The interest of the latter formula lies in the middle term. Our results stem from the
observation that the tiled Cholesky factorization is much better behaved when
we schedule it with an ALAP (As Late As Possible) heuristic than an ASAP (As
Soon As Possible) heuristic.  We also provide scheduling heuristics which match
closely the lower bound on execution time.  We believe that our theoretical
results will help practical scheduling studies. Indeed, our results enable to better characterize the quality of
a practical schedule with respect to an optimal schedule.

\end{abstract}

\begin{IEEEkeywords}
Scheduling; Cholesky factorization; makespan lower bound;

\end{IEEEkeywords}

\IEEEpeerreviewmaketitle

\section{Introduction}

Most time-consuming tasks performed on supercomputers are linear algebra
operations. With the advent of multicore architectures and massive parallelism,
this results in the necessity to optimize and understand their parallel
executions. Here, we consider the problem of the tiled Cholesky factorization.
The algorithm divides the initial matrix into square sub-matrices, or
\textit{tiles} of the same size. The focus will be on large instances of the
tiled Cholesky factorization, that is where the number of tiles is large, which
allows asymptotical analysis.
To the authors' knowledge, no theoretical non
trivial bound on the execution time of any schedule for the tiled Cholesky
factorization have been found. This motivates this paper.

We note that the tiled Cholesky factorization algorithm has recently received
plenty of attention. Either as an algorithm in
itself~\cite{Gustavson2009,Kurzak2008} or as a case study for task-based
schedulers~\cite{Chan:2008,Agullo2010Comparison,Song2009,agullo2015bridging,journals/concurrency/KurzakLDB10,badia2009}.
Examples of task-based schedulers which have produced papers about the
scheduling of tiled Cholesky factorization are for example
DAGuE~\cite{dague:2012}, KAAPI~\cite{kaapi:2007}, QUARK~\cite{quark:2011}, StarPU~\cite{starpu:2011},
SMPSs~\cite{smpss:2007}, and SuperMatrix~\cite{qqgzc:2009}.
We also note that OpenMP since 3.1 supports task-based parallelism.
The tiled Cholesky
factorization algorithm is also used in practice and is implemented in Dense
Linear Algebra state of the art libraries, for example DPLASMA,
FLAME, and PLASMA.

It is therefore of interest to better understand the parallel execution of the
tiled Cholesky factorization algorithm. In this paper, we neglect
communication costs between processing units. We also ignore any memory
problems a real execution would encounter. Also, we assume homogeneous
processing units.  Also, we assume ideal flop-based weights for the execution
time of the tasks.  While we acknowledge that this is a very unrealistic and
simplistic model, we note that any practical implementations will execute
slower than this model. Therefore, this model provides lower bounds on the
execution time of any parallel execution on any parallel machine.  The lower
bounds that we exhibit are not trivial and are relevant for practical
applications.

We can relate our work to the recent work of Agullo et
al.~\cite{agullo2015bridging} where the authors provide lower bound as well.
The authors consider a more complicated model (heterogeneous) and solve their
problem with an LP formulation. We consider a simpler model (homogeneous)  but we
provide closed-form solutions. Another contribution of our paper is the ALAP
schedule heuristic where tasks are scheduled from the end of the execution as
opposed from the start.

We can also relate our work to the work of Cosnard, Marrakchi, Robert,
and Trystram from 1988 to
1989~\cite{COSNARD1988275,Robert1989159,j38}.  In this work, the
authors have the same model and ask similar questions as ours. A minor
difference is that they are studying the Gaussian elimination while we study
the Cholesky factorization. The main difference is that they study the Level 1
BLAS algorithm which work on columns of the matrix. This algorithm was popular
in the 1980s due to vectorization, nowadays tiled algorithms are much more
relevant.  Also the scheduling of the Level 1 BLAS algorithm seems to be an easier
problem.  In the Level 1 BLAS algorithm, the matrix is partitioned by columns.  The number of created tasks is
$\mathcal{O}(t^2)$ where $t$ is the number of columns of the problem.  In
our case, we partition the matrix by tiles. If we have a $t$-by-$t$ tile
matrix, the number of created tasks is $t^3$. We have tried to apply similar
techniques as in the Level 1 BLAS algorithm study papers to solve the
tiled problem and we have been unsuccessful. We have tried to solve the the Level 1 BLAS algorithm
problem with our
techniques and have obtained the same results.

A few scheduling algorithms exist for the (tiled) Cholesky factorization ; in
practice, the ALAP (As Late As Possible) schedule seems to work well with the
tiled Cholesky factorization. This motivates our study of this
heuristic in Section~\ref{sec:alap}.  In particular, we derive an upper bound
on the number of processing units necessary to reach the critical path of the
algorithm.  Then, we present in Section~\ref{sec:lowerbounds} some lower bounds
on the execution time of any schedule using a given number of processing units
by splitting the task set into two subsets.  In Section~\ref{sec:tightness}, we
analyze the last bound found in Section~\ref{sec:lowerbounds}, and show that it
is nearly optimal by describing a schedule with efficiency close to the
efficiency bound derived from it.

\section{Context, Definitions, Assumptions}
\label{sec:assumptions}

Given a Symmetric Positive Definite (SPD) matrix $A$, the Cholesky
factorization gives a (lower) triangular matrix $L$ such that $A = LL^{T}$. It
is a core operation to compute the inverse of SPD matrices using the Cholesky
inversion. Note that it also allows to solve systems of the form $Ax = b$ by
reducing it to computing solutions of $Ly = b$, and then $L^{T}x = y$.

In order to compute such a factorization using many processing units, we divide the
matrix $A$ into $t \times t$ square tiles of size $n_{b}$. This allows tile
computations, and globally increases the amount of parallelism and the data
locality. Algorithm~\ref{alg1} computes the Cholesky factorization of $A$ using
these blocks. 

\begin{algorithm}
\caption{Tiled Cholesky Factorization}
\label{alg1}
\begin{algorithmic}
\FOR{$k=0$  to  $t-1$}
\STATE $A_{k,k} \leftarrow POTRF(A_{k,k})$ \hspace{45 pt} \COMMENT{$C_{k}$}
\FOR{$i=k+1$ to $t-1$}
\STATE $A_{i,k} \leftarrow TRSM(A_{k,k}, A_{i,k})$\hspace{27 pt}\COMMENT{$T_{i,k}$}
\ENDFOR
\FOR{$j=k+1$ to $n-1$}
\STATE $A_{j,j} \leftarrow SYRK(A_{j,k}, A_{j,j})$\hspace{30 pt}\COMMENT{$S_{j,k}$}
\FOR{$i=j+1$ to $n-1$}
\STATE $A{i,j} \leftarrow GEMM(A_{i,k}, A_{j,k})$\hspace{9 pt}\COMMENT{$G_{i,j,k}$}
\ENDFOR
\ENDFOR
\ENDFOR
\end{algorithmic}
\end{algorithm}

We will rename the tasks corresponding to the POTRF as $C$ or $C_{i}$ with
$1\leq i \leq t$ (as POTRF is a $n_{b}$-by-$n_{b}$ Cholesky factorization), TRSM as $T$ or
$T_{i,j}$ with $1\leq j < i \leq t$, SYRK as $S$ or $S_{i,j}$ with $1\leq j < i
\leq t$, and GEMM as $G$ or $G_{i,j,k}$ with $1\leq k < j < i \leq t$ to refer
to the tasks in Algorithm~\ref{alg1}.

We neglect any communication cost here. Also, tasks $C, T, S, G$ will be
considered as \textit{elementary}: at most one processing unit can execute such
a task at a given time (no divisible load).

The dependencies between the tasks are given by:

\begin{itemize}

\item $C_{j} \rightarrow T_{i,j}, j<i\leq t$;

\item $T_{i,j} \rightarrow S_{i,j}, j<i\leq t$;

\item $T_{i,j} \rightarrow G_{i,k,j}, j<k<i\leq t$;

\item $T_{i,j} \rightarrow G_{k,i,j}, j<i<k\leq t$;

\item $S_{i,j} \rightarrow S_{i,j+1}, j+1<i\leq t$;

\item $S_{i,i-1} \rightarrow C_{i}, 1<i\leq t$;

\item $G_{i,j,j-1} \rightarrow T_{i,j}, 1<j<i\leq t$;

\item $G_{i,j,k} \rightarrow G_{i,j,k+1}, k+1<j<i\leq t$.

\end{itemize}

Figure \ref{Dag55} presents the Directed Acyclic Graph (DAG) of the
dependencies between the tasks of a $5\times 5$ tiled Cholesky Factorization.

\begin{figure}[!t]
\centering
\includegraphics[width=3in]{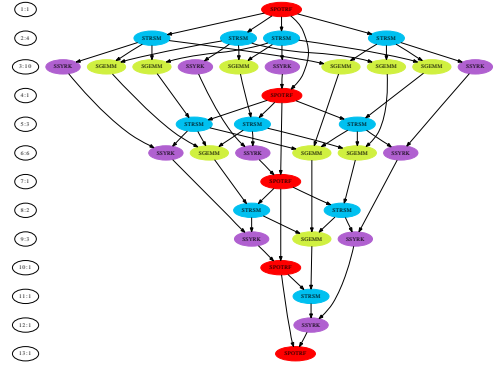}
\caption{DAG of a $5\times 5$ Cholesky factorization}
\label{Dag55}
\end{figure}

For a task $X$, $cp(X)$ will denote the critical path of task $X$, and $w(X)$ its weight.

The number of tasks of each kind is given in Table \ref{numberTasks}.

\begin{table}
\renewcommand{\arraystretch}{1.6}
\caption {Number of tasks}
\label{numberTasks}
\centering
\begin{tabular}{|c|c|}
  \hline
  Type of task & Number of tasks\\
  \hline
  C  & $t$\\
  \hline
  S  & $\frac{t(t-1)}{2}$\\
  \hline
  T  & $\frac{t(t-1)}{2}$ \\
  \hline
  G  & $\frac{t^{3}}{6} - \frac{t^{2}}{2} + \frac{t}{3}$ \\
  \hline
\end{tabular}
\end{table}

Moreover, we will assume that $n_{b}$ is large, so that the weights of the
tasks can be those of Table~\ref{weights}. As a result, for the rest of this
study, we will consider $\frac{1}{3}n_{b}^{3}$ as time unit: executing a POTRF
will take $1$ unit of time, executing a SYRK or a TRSM will take $3$ steps, and
a GEMM $6$ steps.

\begin{table}[!t]
\renewcommand{\arraystretch}{1.6}
\caption {Task weights for the Cholesky factorization}
\label{weights}
\begin{tabular}{|c|c|c|}
  \hline
  Type of Task & number of Flops & Weight (in $\frac{1}{3}n_{b}^{3}*$Flops) \\
  \hline
  POTRF ($C$) & $\frac{1}{3}n_{b}^{3} + \mathcal{O}(n_{b}^{2})$ & 1 \\
  \hline
  TRSM ($T$)& $n_{b}^{3}$ & 3\\
    \hline
  SYRK ($S$)& $n_{b}^{3} + \mathcal{O}(n_{b}^{2})$ & 3 \\
    \hline
  GEMM ($G$)& $2n_{b}^{3} + \mathcal{O}(n_{b}^{2})$ & 6 \\
  \hline
\end{tabular}
\end{table}

We can also count the number of tasks of each kind, and their critical path.
The respective critical paths for the tasks are given in Table \ref{cptask},
which gives the values in Table~\ref{cps}. 

\begin{table}
\renewcommand{\arraystretch}{1.2}
\caption {Critical paths of the tasks}
\label{cptask}
\centering
\begin{tabular}{|c|c|}
  \hline
  Task & Critical Path \\
  \hline
  $C_{i}$  & $C_{i}-T_{i+1,i}-G_{i+2,i+1,i}-T_{i+2,i+1}- . . .$\\
  & $. . . - T_{t,t-1}-S_{t,t-1}-C_{t}$\\
  \hline
  $T_{i,j}$  & $T_{i,j}-G_{i,j+1,j}-T_{i,j+1}- . . . $\\
  &$. . .- T_{i,i-1}-G_{i+1,i,i-1}-T_{i+1,i}- . . .$\\
  \hline
  $S_{i,j}$  & $S_{i,j}-S_{i,j+1}- . . . -S_{i,i-1}-C_{i}- . . .$ \\
  \hline
  $G_{i,j,k}$  & $G_{i,j,k}-G_{i,j,k+1}- . . . -G_{i,j,j-1}-T_{i,j}- . . .$ \\
  \hline
\end{tabular}

\end{table}

\begin{table}
\renewcommand{\arraystretch}{1.2}
\caption {Weight of critical paths of the tasks}
\label{cps}
\centering
\begin{tabular}{|c|c|}
  \hline
  Task & Weight of Critical Path \\
  \hline
  $C_{i}$  & $1$ if $i=t$  \\
  &$9(t-i)-1$ otherwise\\
  \hline
  $T_{i,j}$  & $9(t-j)-2$\\
  \hline
  $S_{i,j}$  & $3(t-j)+1$ if $i=t$ \\
  & $9t-6i-3j-1$ otherwise \\
  \hline
  $G_{i,j,k}$  & $9t-3j-6k-2$ \\
  \hline
\end{tabular}

\end{table}

In particular the critical path for the algorithm ($CP$) is reached for task
$C_{1}$, and equals $9t-10$.

The total work ($TW$) of the Cholesky factorization, that is the sum of the
weights of all tasks of the algorithm, equals $t^{3}$ (with
$\frac{1}{3}n_{b}^{3}$ as the work unit).

Note that as the total work is in $\mathcal{O}(t^{3})$, and as the global
critical path of the algorithm is $\mathcal{O}(t)$, a number $p =
\mathcal{O}(t^{2})$ of processing units is required to achieve a makespan equal to
the critical path. For that reason, our results will be presented with a
number of processing units $p = \alpha t^{2}$, and with $t$ large to allow
such an analysis.

\section{The ALAP (As Late As Possible) schedule} 
\label{sec:alap}

In this section, we analyze the ALAP schedule for the $t \times t$ tiled
Cholesky factorization.  This heuristic seems indeed to perform quite
well.  The schedule is executed as follows: the elementary tasks
(POTRF, SYRK, TRSM, GEMM) are sorted according to their critical path. Then the
tasks with least critical path are set to be executed last; those with least
critical path among the remaining tasks set to be executed the latest possible
so that the previous ones can be executed, etc. Thus, if this schedule has a
makespan $\tau$ and has enough processing units, task $X$ will begin its
execution at time $\tau - cp(X)$ where $cp(X)$ denotes the critical path of
task $X$. Therefore, we study the distribution of the critical paths of the
tasks to understand how many processing units are required to run an optimal
ALAP schedule. 

The section flows as follows. First we analyze the ALAP schedule to obtain the
number of tasks executed at each ticks of the algorithm. The results are
presented in Table~\ref{eq:alap heights}. Then we established simpler lower
bounds and upper bounds to study this function. These bounds are given in
Table~\ref{lowerHeightALAP} and Table~\ref{upperHeightALAP}.  In
Figure~\ref{ALAPbounds}, we plot the exact function and our associated lower
and upper bounds for $t=60$.  We note that our lower and upper bounds are
assymptotically close to the function.  Finally we conclude the section with an
upper bound on the number of processors needed to obtain a makepsan equal to
the critical path.

Figure \ref{ALAP88} shows the execution of an ALAP schedule with sufficiently
many processing units on a $8 \times 8$ tiled Cholesky factorization, with
every rectangle representing a task, and the time on the x-axis.

\begin{figure}[!t]
\centering
\includegraphics[width=3in]{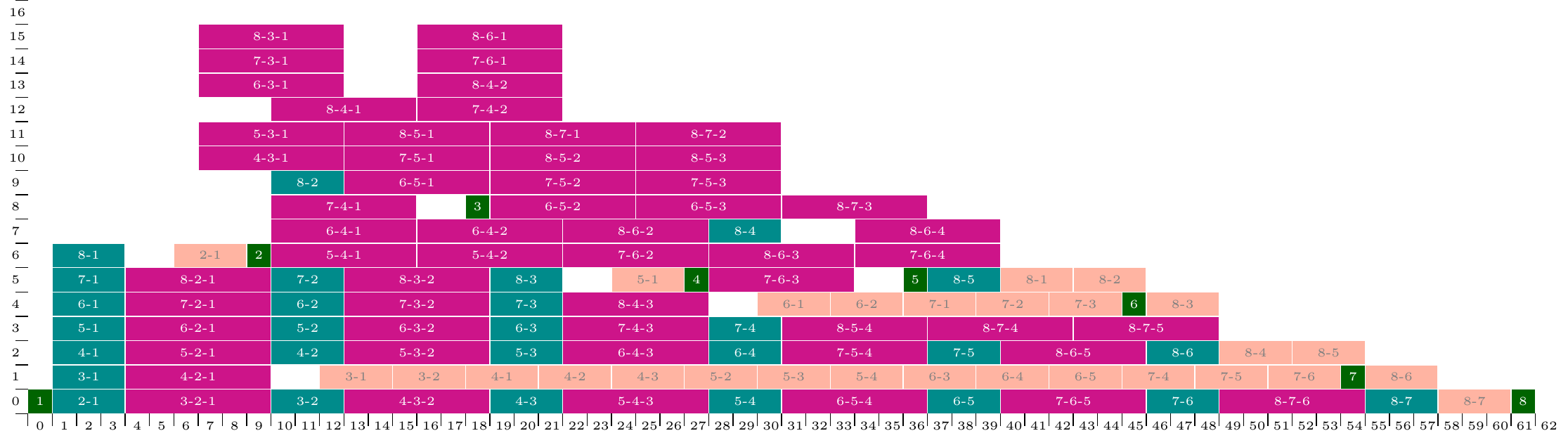}
\caption{ALAP schedule on $8 \times 8$ tiles. Dark green POTRF, light green for
TRSM, salmon for SYRK, and magenta for GEMM.  Time is in the x-axis. Execution
time is the critical path length: $9t-10 = 62$. Number of required processing
units for the schedule is 16.
}
\label{ALAP88}
\end{figure}

More precisely, we want to determine, at time $\tau - K$, where $K$ denotes a
critical path parameter, how many tasks are executed with an ALAP schedule with
sufficiently many processing units. Note that the execution of a task $X$ starts at
$\tau - cp(X)$, but ends at $\tau - cp(X) + w(X)$. Therefore, such a task $X$
should count as being executed for times $\tau - cp(X)$ to $\tau - cp(X) + w(X)
-1$ (we suppose that at time $\tau - cp(X) + w(X)$, the execution of task $X$ has finished).

With that in mind, we can count the number of tasks of each kind (POTRF, SYRK,
TRSM, GEMM) being executed at a time $\tau - K$. 

Counting the tasks C and T is straightforward with the formulas of Table~\ref{cps}. 

To count the tasks S, note that the \textit{S-floor of level $i$}, $i < t$
defined by being the set $\left\{S_{i,j}\right\}_{j<i}$ begins at $K^{(i)}_{min} = 8 +
9(t-i-1) = 9(t-i)$ and ends at $K^{(i)}_{max} = 8+9(t-i-1)+1+3(i-1)-1 = 9t - 6i -4$, and that for any
$K$ in-between, exactly $1$ task of the S-floor of level $i$ is executed. Note that the
S-floor of level $i$ starts (resp. ends) before level $i-1$ starts (resp. ends)
with respect to $K$. Therefore, for any $K$, there are some $i_{min} > 1$ and
$i_{max} < t$, such that at time $\tau - K$, exactly one task from every floor $i$,
$i_{min}\leq i \leq i_{max}$ is executed; we then add the S-floor of level $t$, which starts at $K=2$ and ends at $K = 3t -2$. And these are the only SYRKs being
executed at that time. The expressions of $i_{min}$ and $i_{max}$ follow by noticing that $i_{max}$ (resp. $i_{min}$) corresponds to the largest $i$ such that $cp(S_{i,1}) \geq K$,  \textit{i.e.} the S-floor of level $i_{max}$ starts before $\tau - K$ (resp. the least $i$ such that $cp(S_{i,i-1})\leq K$: the S-floor of level $i_{min}$ finishes before $\tau - K$), and using Table \ref{cps}.

For the tasks G, let us fix $K$. Let $J = \left\{ j | \textmd{ some } G_{i,j,k}
\textmd{ is executed at time } \tau - K \right\}$. Note that $G_{i,j,k}$ is executed at time $\tau - K$ if and only if $K \leq cp(G_{i,j,k}) \leq K+5$, as tasks G require 6 steps of time to be executed. The expression of the critical path of the tasks G gives that $J$ is an integer interval $\llbracket
j_{min}, j_{max} \rrbracket$, and that for all $j \in J$, there is a unique
$k_{j}$ such that $G_{i,j,k_{j}} \textmd{ is executed at time } \tau - K $.
Also, for all $j \in J$, any $G_{i,j,k_{j}}$ where $j < i \leq t$ is executed
at time $\tau - K$ (a whole \textit{G-column} is executed). To compute $j_{min}$ (resp. $j_{max}$), one necessary and sufficient condition is that $cp(G_{i,j_{min},k}) \geq K$ for some $k$ (resp. $cp(G_{i,j_{max},k}) \leq K$ for some $k$), which is equivalent to the fact that $cp(G_{i,j,j-1}) \geq K$  (resp.$cp(G_{i,j,1}) \leq K$).

This reasoning gives Table~\ref{eq:alap heights} where we can find the
following formulas, with $M_{X,K}$ being the number of tasks of type X being
executed at time $\tau - K$ in an ALAP schedule with sufficiently many
processing units.

\begin{table}
\renewcommand{\arraystretch}{1.6}
\caption {The heights of the ALAP schedule}
\label{eq:alap heights}
\centering
\begin{tabular}{|c|}
\hline
\begin{minipage}{7cm}
$M_{X,K}$ is the number of
tasks of type X being executed at time $\tau - K$ in an ALAP schedule with
sufficiently many processing units.
\begin{eqnarray}
\nonumber
M_{C,K} & = &
\left\{\begin{array}{cl}
1&\textmd{ if } K = 9\ell +8,\\
0&\textmd{ else.}
\end{array}\right.\\
\nonumber
M_{T,K} & = &
\left\{\begin{array}{cl}
\ell+1&\textmd{ if } 9\ell + 5 \leq K \leq  9\ell + 7\\
0&\textmd{ else.}
\end{array}\right.\\
\nonumber
M_{S,K} & = &
\left\{\begin{array}{cl}

i_{max} 
- i_{min} +2&\textmd{if } 2\leq K \leq 3t -2\\
\\
i_{max}  
- i_{min} +1 &\textmd{otherwise}
\end{array}\right.\\
\nonumber
M_{G,K} & = &
\sum_{j = j_{min}+1}^{j_{max}}
\left( t - j \right)
\end{eqnarray}
where $i_{max} = \min\left(t, \lfloor \frac{3t}{2} - \frac{K}{6} - \frac{7}{12}
\rfloor \right)$ and $i_{min} = \lceil t - \frac{K}{9}\rceil$ denote the two
extremal S-floors executed at time $\tau-K$, and $j_{max} = \min\left(t-1,
\lfloor 3t - \frac{K}{3} - \frac{8}{3}\rfloor \right)$, $j_{min} = \lceil t -
\frac{K}{9} - \frac{2}{9}\rceil$ are as defined above.
\end{minipage}\\
\hline
\end{tabular}
\end{table}

We can now divide the execution time of the ALAP schedule into three zones. 

In a first zone, both $i_{max}$ and $j_{max}$ are constrained, as they cannot
be greater than $t$ and $t-1$ respectively. One possible interpretation is that
for a bigger instance of Cholesky factorization (with $t' > t$), other S-floors
would have been executed in this zone; in addition to the floors, other
G-columns would have been executed. This zone is delimited by $K < 3t +2 :=
K_{S}$. We will call this zone \textit{Zone 1}.

In a second zone, $i_{max}$ is constrained by $t$, but $j_{max}$ is not
constrained. It corresponds to $K_{S} < K \leq K_{G} := 6t-5$. We will call
this zone \textit{Zone 2}.

In a third zone, $i_{max}$ and $j_{max}$ are not constrained; it is the case as
long as $K > K_{G}$. We will call this zone \textit{Zone 3}.

Summing these formulas give the height of the ALAP schedule we look for. But
because of the ceils and floors, we will not get any clear formula at the end.
As a result, we focus on getting lower bounds and upper bounds on the height,
by using $x\leq \lceil x \rceil < x+1$, and $x-1<\lfloor x \rfloor \leq x$. Let
us note $h(t,K)$ the number of tasks being executed at time $\tau-K$ by an ALAP
schedule with sufficiently many processing units. We have $h(t,K) = M_{C,K} +
M_{T,K} + M_{S,K} + M_{G,K}$. That gives the formulas in
Tables~\ref{lowerHeightALAP} and~\ref{upperHeightALAP}.

\begin{table}
\renewcommand{\arraystretch}{1.6}
\caption {Lower bound on the height of the ALAP schedule}
\label{lowerHeightALAP}
\centering
\begin{tabular}{|c|c|}
  \hline
  Zone & Lower bound on height \\
  \hline
  1  & $\frac{K^{2}}{162}-\frac{5K}{162}-\frac{25}{81}$\\
  \hline
  2  & $\frac{K^{2}}{162}-\frac{16K}{162}+\frac{t}{2}-\frac{289}{324}$\\
  \hline
  3  & $-\frac{4K^{2}}{81}+\frac{2Kt}{3}-\frac{197K}{162}-2t^{2}+\frac{119t}{18}-\frac{587}{108}$ \\
  \hline
\end{tabular}
\end{table}

\begin{table}
\renewcommand{\arraystretch}{1.6}
\caption {Upper bound on the height of the ALAP schedule}
\label{upperHeightALAP}
\centering
\begin{tabular}{|c|c|}
  \hline
  Zone & Upper bound on height \\
  \hline
  1  & $\frac{K^{2}}{162}+\frac{31K}{162}+\frac{155}{81}$\\
  \hline
  2  & $\frac{K^{2}}{162}+\frac{2K}{81}+\frac{t}{2}+\frac{755}{324} $\\
  \hline
  3  & $-\frac{4K^{2}}{81}+\frac{2Kt}{3}-\frac{107K}{162}-2t^{2}+\frac{83t}{18}+\frac{37}{108}$ \\
  \hline
\end{tabular}
\end{table}

In Figure~\ref{ALAPbounds}, we plot the exact distribution of the execution of
the tasks, and the upper and lower bound functions.  We note that our lower and
upper bounds are assymptotically close to the function. From this figure, we
see that for $t=60$ tiles, we have a schedule that executes in $9t-10$ (=530)
for 907 processing units.  Our upper bound (which is simpler to analyze)
guarantees that we need at most 913 processing units.

\begin{figure}[!t]
\centering
\includegraphics[width=3in]{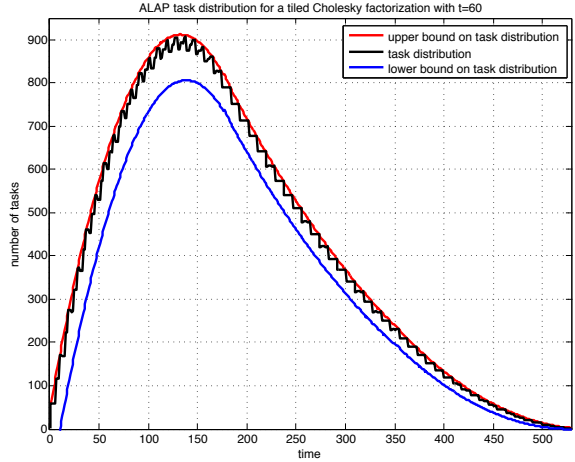}
\caption{ALAP distribution for $t=60$ tiles. In black is the exact distribution. Red is the upper bound. Blue is the lower bound.}
\label{ALAPbounds}
\end{figure}

Note that the maximum height of the distribution (that is the minimum number of
processing units required to run the ALAP schedule optimally) is reached in
\textit{Zone 3}, \textit{i.e.} where $K > 6t - 5$. That phenomenon is verified
experimentally. It can also be deduced from the fact that the height in the two
other zones is essentially a degree 2 polynomial in $K$ with positive leading
coefficient (which comes from the number of G executed); therefore by convexity
the maximum height there is reached at the extremities of the zones. The result
follows by observing that the height in the third zone is increasing at its
beginning.

From that, we obtain the maximum height of the distribution, which gives the
minimal number of processing units required to run the ALAP schedule:

$$\forall K, h(t,K) \leq \frac{1}{4}t^2+\frac{11}{72}t+\frac{13225}{5284}.$$

We deduce an upper bound on the number of processing units required to reach
the critical path of the Cholesky factorization:

\begin{mythm}
\label{th:1}
The ALAP schedule of a $t \times t$ tiled Cholesky factorization completes in
the critical path time ($9t-10$) using less than 
$p = 0.25 t^2+0.16t+3$ processing units.
\end{mythm}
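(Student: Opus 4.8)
The plan is to read the theorem off the height bound $\forall K,\ h(t,K) \le \tfrac14 t^2 + \tfrac{11}{72}t + \tfrac{13225}{5284}$ displayed just above, via two elementary facts: that the ALAP schedule with makespan $\tau = 9t-10$ is feasible using exactly $\max_K h(t,K)$ processing units, and that this quantity is strictly below $0.25t^2+0.16t+3$ for every $t$. First I would record why $\max_K h(t,K)$ processors suffice. By definition of the ALAP schedule, task $X$ occupies the interval $[\,\tau - cp(X),\ \tau - cp(X) + w(X)\,)$; since $\tau = 9t-10 = cp(C_1) = \max_X cp(X)$, every start time $\tau - cp(X)$ is $\ge 0$ and every completion time $\tau - cp(X)+w(X)$ is $\le \tau$, so the makespan is exactly $9t-10$. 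The precedence constraints are respected because, by definition of the critical path, $Y \to X$ implies $cp(Y) \ge w(Y) + cp(X)$, hence $Y$ finishes at $\tau - cp(Y) + w(Y) \le \tau - cp(X)$, the start of $X$. Finally, since all start times are fixed a priori, the set of tasks alive at time $\tau - K$ has size exactly $h(t,K) = M_{C,K}+M_{T,K}+M_{S,K}+M_{G,K}$ — the four families being disjoint and exhausting the tasks counted in Table~\ref{eq:alap heights} — so $\max_K h(t,K)$ processors are both necessary and sufficient to run this schedule.

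It then remains to bound $\max_K h(t,K)$. Here I would invoke the zone decomposition and Tables~\ref{lowerHeightALAP}--\ref{upperHeightALAP}: in Zones~1 and~2 the upper bound on $h$ is a convex (upward) quadratic in $K$, hence maximized at a zone endpoint, where one checks it is $O(t^2)$ with leading coefficient at most $\tfrac29 < \tfrac14$; in Zone~3 the upper bound $-\tfrac{4K^2}{81} + \tfrac{2Kt}{3} - \tfrac{107K}{162} - 2t^2 + \tfrac{83t}{18} + \tfrac{37}{108}$ is a concave quadratic whose vertex $K^\ast = \tfrac{81}{8}\!\left(\tfrac{2t}{3} - \tfrac{107}{162}\right) = \tfrac{27}{4}t - \tfrac{107}{16}$ satisfies $6t-5 < K^\ast < 9t-10$ for $t$ not too small, so it lies in the admissible range of $K$. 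Substituting $K = K^\ast$ collapses the $t^2$-terms to $\left(-\tfrac94 + \tfrac92 - 2\right)t^2 = \tfrac14 t^2$ and reproduces precisely the displayed bound $h(t,K) \le \tfrac14 t^2 + \tfrac{11}{72}t + \tfrac{13225}{5284}$ (which, being valid for \emph{every} $K$, makes the exact location of the maximum immaterial for the conclusion). Since $\tfrac{11}{72} < 0.16$ and $\tfrac{13225}{5284} < 3$, and $\left(0.16 - \tfrac{11}{72}\right)t \ge 0 > \tfrac{13225}{5284} - 3$ for all $t \ge 0$, we get $\max_K h(t,K) < 0.25t^2 + 0.16t + 3$; combined with the feasibility argument this proves the theorem.

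The genuinely delicate point is the feasibility step, not the arithmetic: one must check that the fixed ALAP start times $\tau - cp(X)$ respect every one of the eight precedence rules of Section~\ref{sec:assumptions} — i.e.\ that the entries of Table~\ref{cps} satisfy $cp(Y) \ge w(Y) + cp(X)$ whenever $Y \to X$, which is best verified rule by rule — and that the per-type counts $M_{C,K},M_{T,K},M_{S,K},M_{G,K}$ really add up to the instantaneous processor demand with no task counted twice and none omitted, which is the crux of the derivation behind Table~\ref{eq:alap heights}. Everything downstream (the zone split, the two bound tables, locating the vertex of the Zone~3 parabola, and the final numerical comparison) is routine once those two points are secured.
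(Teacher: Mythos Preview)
Your proposal is correct and follows essentially the same route as the paper: derive the per-type counts in Table~\ref{eq:alap heights}, pass to the zone-wise upper bounds in Table~\ref{upperHeightALAP}, argue by convexity that Zones~1--2 cannot host the maximum, and read off the vertex value of the Zone~3 parabola. You are somewhat more explicit than the paper in two places --- the feasibility check that the fixed start times $\tau-cp(X)$ respect all precedences (the paper treats this as part of the definition of ALAP), and the actual computation of the vertex $K^\ast=\tfrac{27}{4}t-\tfrac{107}{16}$ and its value --- but these are elaborations of the same argument, not a different one.
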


We note that this results is much better than what an ASAP schedule would give.
The ASAP schedule of a $t \times t$ tiled Cholesky factorization completes in
the critical path time ($9t-10$) critical path using  $p = 0.5 t(t-1)$ processing
units. The analysis is easy and is based on the fact that, with an ASAP
schedule, after the first POTRF and the first $t-1$ TRSM, one has $0.5 (t-2)(t-1)$
SYRK and GEMM tasks to execute.

\section{Lower bounds}
\label{sec:lowerbounds}

We now consider lower bounds on the makespan of the Cholesky factorization. To
that purpose, we split the task set into two parts, and combine lower bounds on
the execution time of both parts.

It is well known and clear that the makespan of any schedule working on a given algorithm
with $p$ processing units is greater than $\max(TW/p, CP)$ (where $TW$ is the
total work necessary to finish the algorithm, $CP$ its critical path). Indeed,
$TW/p$ corresponds to a schedule that achieves full parallelism during its
whole execution with p processing units; and no schedule can execute the
algorithm faster than $CP$.

The idea here will be to combine those two bounds: let $K$ be a critical path parameter.
Let $E_{2}$ be the set of tasks $X$ such that $cp(X) - w(X) \leq K$, and $E_{1}$ be its complement.

Note that in the general case, $E_{2}$ contains some tasks with critical path
$\geq K$. Indeed, any task of $E_{2}$ without parents in $E_{2}$ necessarily
has critical path $\geq K$ (else any of its parents would have been in $E_{2}$
too). Also, as long as $E_{1}$ and $E_{2}$ are not empty, some task in $E_{1}$
has a child in $E_{2}$ (the two sets cannot be disconnected).  

Note also that all the tasks in $E_{1}$ have critical path $>K$.

Define for a set $E$ of tasks, the total weight of $E$ $w(E) = \sum_{X \in E}
w(X)$ the sum of the weights of the tasks in $E$. The basic idea is that one
will essentially require time $\geq w(E_{1})/p$ to execute $E_{1}$ and time
$\geq K$ to execute $E_{2}$.

This results in the following claim to prove our lower bound:

\begin{myclaim}
Let $K$ be a critical path parameter. Let $E_{2}$ be the set of tasks $X$ such that $cp(X) - w(X) \leq K$, and $E_{1}$ be its complement.
Then the makespan of any schedule for using $p$ processing units is greater than
$$
\max(CP, \frac{w(E_{1})}{p} + K)
$$
\end{myclaim}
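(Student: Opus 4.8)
The plan is to prove the two lower bounds $CP$ and $w(E_1)/p + K$ separately, since the makespan is at least each of them. The bound $CP$ is immediate: no schedule can run faster than the critical path of the algorithm, regardless of the number of processing units. So the real content is the second bound.

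First I would set up notation: fix an arbitrary schedule using $p$ processing units with makespan $\tau$. The key structural fact is that the tasks in $E_1$ cannot start ``too early'' — more precisely, every task $X \in E_1$ satisfies $cp(X) > K$ (this is noted in the excerpt right before the claim), and in any schedule a task $X$ must finish by time $\tau$, hence its critical path obligation forces it to... actually the cleaner route is the other direction: I want to show that $E_1$ has no choice but to occupy the \emph{last} $w(E_1)/p$-ish units, while $E_2$ needs the \emph{first} $K$ units. Let me instead argue as follows. Consider the tasks in $E_2$. I claim some chain of tasks entirely within a critical-path-like structure forces $E_2$ to take at least time $K$ before $E_1$ can be substantially processed. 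The defining property $cp(X) - w(X) \le K$ for $X \in E_2$ means: the portion of the critical path strictly below $X$ (i.e. $cp(X) - w(X)$) is at most $K$. Dually, for $X \in E_1$, $cp(X) - w(X) > K$, so there is a dependency chain from $X$ down to a sink of total weight $> K$, and this chain — except possibly for $X$ itself — lies in... hmm, I need to be careful about which tasks the chain passes through.

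The cleanest argument: let $X \in E_1$ be any task, and consider a maximal descending chain $X = X_0 \to X_1 \to \dots \to X_m$ (a sink), whose weight below $X_0$ is exactly $cp(X_0) - w(X_0) > K$. Every $X_i$ with $i \ge 1$ has $cp(X_i) \le cp(X_0) - w(X_0) \le cp(X_0)$, but more usefully $cp(X_i) \le $ the weight of the chain from $X_i$ down, which is $\le cp(X_0)-w(X_0)$; wait I actually want these descendants to be in $E_2$ so that they ``cost'' $K$ of time. That is not automatic. So the argument must instead be global: in any schedule with makespan $\tau$, each task $X$ finishes no later than $\tau$ and starts no earlier than $cp(X) - w(X)$ below the ``bottom'' — actually a task $X$ cannot \emph{finish} before time $w(X') $ summed along its longest ancestor chain, and cannot \emph{start} after $\tau - cp(X)$. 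Combining: the task $X$ occupies an interval inside $[\text{(longest ancestor weight)}, \tau - cp(X) + w(X)]$. I would use: every $E_1$-task must have finished by time $\tau$, and its execution cannot begin before time $0$; meanwhile, I want to say that during the time interval $[0, \text{something}]$, only $E_2$ tasks that are ``deep'' can run. Let me reframe via the dual: every task $X$ in $E_1$ has a chain of successors down to a sink of weight $> K$; in any schedule those successors must be executed \emph{after} $X$, and they take total time $> K$ along a single chain, so $X$ must finish by time $\tau - K$ (since after $X$ finishes, a chain of length $> K$ must still be processed sequentially before time $\tau$). Wait — that gives $X$ finishes by $\tau - K$, hence \emph{all} of $E_1$ is squeezed into $[0, \tau - K)$.

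That is exactly what I want. So the steps are: (1) Reduce to proving makespan $\ge w(E_1)/p + K$. (2) Show every $X \in E_1$ finishes at or before time $\tau - K$ in any schedule: because $cp(X) - w(X) > K$ means there is a path of successors of $X$ (excluding $X$ itself) whose weights sum to $> K$... I must double-check the indexing — $cp(X)$ by the paper's convention is the weight of the longest path \emph{starting at and including} $X$, so $cp(X) - w(X)$ is the longest path among $X$'s children, i.e. $\ge K$ strictly; these children must all run after $X$ completes, and a single chain of weight $> K$ among them must run sequentially, taking $> K$ time, and must finish by $\tau$; hence $X$ completes before $\tau - K$. (3) Therefore all work in $E_1$, totaling $w(E_1)$, is done within the time window $[0, \tau - K)$ using at most $p$ processors, so $w(E_1) \le p(\tau - K)$, giving $\tau \ge w(E_1)/p + K$. (4) Combine with $\tau \ge CP$. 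The main obstacle I anticipate is step (2): pinning down exactly that the ``$cp(X) - w(X) > K$'' condition yields a successor chain of weight strictly greater than $K$ that must be sequentially executed after $X$, and handling the boundary/strictness carefully (strict vs. non-strict inequalities, and whether the chain could include tasks already ``charged'' elsewhere — it doesn't matter, since we only use that this chain occupies $> K$ units of time after $X$ finishes, regardless of what else happens). Everything else is routine.
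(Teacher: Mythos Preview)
Your final outline (steps (1)--(4)) is correct and is essentially the same argument as the paper's, just framed as a direct proof rather than by contradiction: both hinge on the observation that any $X \in E_1$ has a successor chain of weight $cp(X)-w(X)>K$ that must run sequentially after $X$, forcing $X$ to finish before time $\tau-K$, whence $w(E_1)\le p(\tau-K)$. The paper phrases this contrapositively (if $\tau-K<w(E_1)/p$, some $X\in E_1$ is unfinished at $\tau-K$, and its successor chain then overruns $\tau$), but the content is identical.
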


\begin{proof}
Suppose by contradiction that there exists a schedule executing the algorithm
in time $\tau < \frac{TW-w(E_{1})}{p} + K$. If $\tau < K$, then the schedule
cannot execute $E_{2}$ (defined above) in time $\tau$, as $E_{2}$ contains some
tasks with critical path $\geq K$, which gives a contradiction. Therefore $\tau
\geq K$. 

Then, let us write $\tau = (\tau -K)+K$. Then $\tau-K < \frac{w(E_{1})}{p}$ by
hypothesis. So after time $\tau -K$, $E_{1}$ cannot be fully executed,
according to the naive bound. Therefore, there is a task from $E_{1}$ that is
not fully completed at time $\tau-K$, and the remaining tasks cannot be
executed in time $K-1$ (as if $X \in E_{1}$, $cp(X)-w(X)>K$ by definition of
$E_{1}$, therefore $X$ has a son with critical path $\geq K$), contradiction.
\end{proof}

A remark here : why not simply take $\widetilde{E_{1}}$ as the set of tasks
that have critical path $> K$, and $\widetilde{E_{2}}$ the set of tasks that
have critical path $\leq K$? Because then the argument would be erroneous, as
tasks $X$ with $cp(X) > K$ and $cp(X)-w(X) < K$ need not to be fully completed
when starting the execution of $\widetilde{E_{2}}$ (for instance, their
execution could be half-done). As a result, we remove these tasks from
$\widetilde{E_{1}}$ and put them in $\widetilde{E_{2}}$, obtaining the previous
definition.

To use this claim, we need an expression of the work $w(E_{1})$ for a some
parameter $K$; we reduce this problem to computing $w(E_{2})$, as $w(E_{1}) +
w(E_{2}) = TW$. Note that the problem is very similar to the calculation of the
height of the ALAP schedule in Section \ref{sec:alap}, but this time, we want
all the tasks with critical path lower than the ones counting for the height of
the ALAP. As a result, we name this quantity the \textit{cumulative
distribution} of the tasks.

In order to simplify the calculations and the results, we focus on the GEMMs
tasks only, which forms the prominent set of tasks of the Cholesky
factorization, both in terms of number (there are $t^{3}/6 + \mathcal{O}(t^2)$
GEMMs), and in terms of work (they gather work $t^{3} + \mathcal{O}(t^2)$)
which asymptotically respectively are the global number of tasks, and the total
work of the algorithm.

Thus, we count, for a given critical path parameter $K$, the number $D_{G}$ of GEMM tasks $X$ such that:
\begin{equation}
\label{eqLowerBound}
cp(X) - w(X) \leq K.
\end{equation}
 
Recall that we assumed that the GEMMs have a weight of 6 (Table~\ref{weights}) and that we proved that the critical path of the task $G_{i,j,k}$ is
$9t-3j-6k-2$ (Table~\ref{cps}). Note then that similarly to counting the
distribution in Section~\ref{sec:alap}, the integer set $J := \left\{ j|
\exists i,k \textmd{ such that } G_{i,j,k} \textmd{ satisfies
(\ref{eqLowerBound})}\right\} $ is convex, hence some $j_{min}$ and $j_{max}$
such that $J = \llbracket j_{min}, j_{max} \rrbracket$. Also, note that for any
$j \in J$, the set $\left\{ k| G_{i,j,k} \textmd{ satisfies
(\ref{eqLowerBound})}\right\}$ is also convex, hence some $k^{(j)}_{min}$ and
$k^{(j)}_{max}$. Note that the formula $cp(G_{i,j,k}) = 9t-3j-6k-2$ immediately
gives $j_{max} = t-1$ and $k^{(j)}_{max} = j-1 \textmd{ } \forall j$. Remark
that $j_{min}$ is determined by the exact same equation as the one considered
in Section \ref{sec:alap}, so that $j_{min} = \lceil t - \frac{K}{9} -
\frac{2}{9}\rceil$. And an easy calculation leads to $k^{(j)}_{min} = max(1,
\lceil 3t/2 - j/2 - K/6 - 7/6 \rceil)$, as $k^{(j)}_{min}$ has to be positive.
So, for fixed j, there are $j-k^{(j)}_{min}$ distinct couples $(j,k)$ that
satisfy Equation~(\ref{eqLowerBound}). And for each admissible couple $(j,k)$,
every $G_{i,j,k}, j<i\leq t$ satisfy Equation~(\ref{eqLowerBound}).

We obtain that the cumulative distribution of the GEMMs is given by:

$$
D_{G,K} = \sum_{j = j_{min}+1}^{j_{max}} (t-j) (j-k^{(j)}_{min}).
$$

The previous claim then gives lower bounds on the execution time for a fixed
number $p$ of processing units. But as we have only considered the GEMMs tasks, we
have to slightly modify our statement.

\begin{myclaim}
Under the same assumptions as in the previous claim, the makespan of any schedule for using $p$ processing units is greater than
$$
\max(CP, \frac{w_{G}(E_{1})}{p} + K),
$$
where $w_{G}(E_{1})$ denotes the total work of the GEMMs in $E_{1}$.
\end{myclaim}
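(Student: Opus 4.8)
The plan is to adapt the proof of the first claim essentially verbatim, replacing the full total work bound by the weaker bound that only counts GEMM work. The only logical change needed is to notice that wherever the first proof invoked ``$E_1$ cannot be fully executed in time less than $w(E_1)/p$,'' the same conclusion holds with $w_G(E_1)$ in place of $w(E_1)$, since the GEMMs in $E_1$ form a subset of $E_1$ and hence $E_1$ certainly cannot be completed before all its GEMMs are completed, which requires time at least $w_G(E_1)/p$ on $p$ processing units.

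Concretely, I would argue by contradiction: suppose some schedule with $p$ processing units finishes in time $\tau < \frac{w_G(E_1)}{p} + K$. As before, if $\tau < K$ we contradict the fact that $E_2$ contains tasks with critical path $\geq K$, so $\tau \geq K$. Writing $\tau = (\tau - K) + K$, the hypothesis gives $\tau - K < \frac{w_G(E_1)}{p}$. The GEMMs of $E_1$ have total work $w_G(E_1)$ and are elementary tasks, so no schedule on $p$ units can have completed all of them by time $\tau - K$; hence at time $\tau - K$ there is a GEMM $X \in E_1$ not yet fully completed. Since $X \in E_1$, by definition $cp(X) - w(X) > K$, so $X$ has a descendant on its critical path with critical path $\geq K$ (chase the critical path of $X$: after finishing $X$, which itself requires the remaining portion of $w(X)$, there is still a path of length $cp(X) - w(X) > K$ to complete). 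Therefore the work remaining after $\tau - K$ cannot be done in time $K$, contradicting $\tau < \frac{w_G(E_1)}{p} + K$. The bound $\max(CP, \cdot)$ then follows since $CP$ is always a lower bound on any makespan.

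I do not anticipate a genuine obstacle here; the statement is a routine weakening of the previous claim, and the real mathematical content lies in the subsequent computation of $w_G(E_1) = t^3 + \mathcal{O}(t^2) - 6 D_{G,K}$ from the formula for $D_{G,K}$, and in optimizing the resulting lower bound over the free parameter $K$. The one point requiring a little care in the write-up is making explicit that ``not fully completed at time $\tau - K$'' for an elementary task $X$ with $cp(X) - w(X) > K$ genuinely forces more than $K$ additional units of work along a dependency chain — but this is exactly the argument already given in the proof of the first claim, so I would simply refer back to it: ``the proof is identical to that of the previous claim, replacing $w(E_1)$ by $w_G(E_1)$ and noting that $E_1$ cannot be completed before all the GEMMs it contains are completed.''
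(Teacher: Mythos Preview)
Your proposal is correct and matches the paper's own treatment essentially verbatim: the paper simply remarks that the previous proof still stands if one replaces $w(E_1)$ by $w_G(E_1)$, keeping the same $E_2$, since $w_G(E_1) < w(E_1)$. Your explicit spelling-out of the contradiction argument (finding an unfinished GEMM in $E_1$ at time $\tau-K$) is exactly that replacement carried through.
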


Note that this claim gives a worse bound than the previous one, as
$w_{G}(E_{1})< w(E_{1})$. Also, the previous proof still stands: it suffices to
replace $w(E_{1})$ by $ w_{G}(E_{1})$ (while keeping the same $E_{2}$).

To use this new result, recall that the total work from the GEMMs is $TW_{G} =
\# GEMMs \times w(G) = t^{3} - 3t^{2} + 2t$. Then, for a fixed parameter $K$,
we have $w_{G}(E_{1}) = TW_{G} - 6 D_{G,K}$.

With that in hand, every parameter $K$ gives a lower bound on the makespan of
any schedule using $p$ processing units.

For instance, we find that as long as $k^{j}_{min} = \lceil 3t/2 - j/2 - K/6 -
7/6 \rceil$, that is as long as $K < 6t - 4$, and with $p = \alpha t^{2}$
processing units, the number of tasks in $E_{2}$ is:
\begin{equation}
\label{sizeE2}
\frac{(K-7)(K^2+10K+16)}{2916}.
\end{equation}

Then, the lower bound associated to $K$ is:
$$
\frac{t^{3} - 3t^{2} + 2t - 6\frac{(K-7)(K^2+10K+16)}{2916}}{p} + K.
$$

At this point, we want to pick the best lower bound possible among all the
parameters $K$. Let us name it $K_{max}$ for instance. 

Experimentally, $K_{max} < 6t - 4$, so that the maximum is reached where
$k^{(j)}_{min} \geq 1$. Also, asymptotically (that is when $t\rightarrow
\infty$), the maximum is reached for $K_{max, \infty} = 9\sqrt{2p}$, which is $< 6t - 4$ as long as $\alpha < 2/9$. We can assume this condition, as we will see that our bound is only relevant for $\alpha \leq 0.186$ : otherwise the naive bound is better than ours. This gives us a correct lower bound, even if in practice the maximum is
not reached at the exact same point as in the asymptotical case; but due to the
complexity of the formula with the exact maximum, we first simplify as: Any
schedule working with $p = \alpha t^{2}$ processing units on the $t\times t$
tiled Cholesky factorization has an execution time greater than:
$$
\frac{t^{3}}{p} +6\sqrt{2p} -\frac{3}{\alpha} -7 + \frac{7\sqrt{2}}{3\sqrt{p}}+\frac{2}{\alpha t} + \frac{83}{243\alpha t^{2}}.
$$
And, with some more simplifications, we get the following theorem.

\begin{mythm}[Lower bound on the makespan of the Cholesky factorization]
Any schedule working with $p$ processing units on the $t\times t$ tiled Cholesky factorization has an execution time greater than:
$$\max( \frac{t^{3}}{p}, 
\frac{t^{3}}{p} - 3\frac{t^2}{p} + 6\sqrt{2p} - 7 ,
9t-10).$$
\end{mythm}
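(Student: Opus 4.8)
The statement packages three separate lower bounds under a single maximum, so the plan is to prove each argument in turn. The first, $t^{3}/p$, is the naive area bound: the total work of the factorization is exactly $TW=t^{3}$ in our work unit, so any schedule on $p$ processing units takes time at least $TW/p$. The third, $9t-10$, is the critical-path bound: we have computed $CP=9t-10$ (attained at $C_{1}$), and no schedule can run faster than its critical path. Both are immediate, and the entire content lies in the middle term.

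For the middle term I would invoke the GEMM-only version of the Claim: for every critical-path parameter $K$, any $p$-processor schedule has makespan at least $\frac{w_{G}(E_{1})}{p}+K$, where $w_{G}(E_{1})=TW_{G}-6D_{G,K}$, $TW_{G}=t^{3}-3t^{2}+2t$, and $D_{G,K}$ is the cumulative GEMM distribution computed above. The point is to feed this one-parameter family the value of $K$ that (asymptotically) maximises it. As a function of $K$ the quantity $\frac{w_{G}(E_{1})}{p}+K$ is a cubic with negative leading coefficient --- it is a constant plus $K-\frac{6}{2916p}(K-7)(K^{2}+10K+16)$ --- whose relevant critical point is $K_{0}=\sqrt{162p+19}-1\approx 9\sqrt{2p}-1$. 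So I would take $K$ to be $9\sqrt{2p}$, rounded to an admissible integer value (which costs only $\mathcal{O}(1)$, absorbed by the slack below, after a short check at $\lfloor 9\sqrt{2p}\rfloor$ and $\lceil 9\sqrt{2p}\rceil$ using that the bound is nearly flat at its peak $K_{0}$). Crucially, this $K$ falls in the range $K<6t-4$ --- equivalently, $\alpha$ below roughly $2/9$ --- exactly when the closed form $D_{G,K}=\frac{(K-7)(K^{2}+10K+16)}{2916}$ established above is valid, and that is precisely the regime in which the middle bound beats the naive one.

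Substituting $K=9\sqrt{2p}$ into $\frac{t^{3}-3t^{2}+2t-6D_{G,K}}{p}+K$ and expanding is the only genuinely computational step. The cubic part of $D_{G,K}$, namely $K^{3}/2916$, contributes $-6K^{3}/(2916p)=-K^{3}/(486p)$, which at $K=9\sqrt{2p}$ equals $-3\sqrt{2p}$; this combines with the bare term $+K=+9\sqrt{2p}$ to yield the $+6\sqrt{2p}$ of the statement. The $t^{3}/p$ survives untouched and the $-3t^{2}/p$ survives (it equals $-3/\alpha$); everything left over --- the $\frac{7\sqrt2}{3\sqrt p}$, $\frac{2}{\alpha t}$, $\frac{83}{243\alpha t^{2}}$ terms recorded just before the theorem, plus the $\mathcal{O}(1)$ slack produced by rounding $K$ and by bounding the floors inside $D_{G,K}$ --- is nonnegative for $t$ large. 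Discarding it only weakens the bound, leaving makespan $\ge \frac{t^{3}}{p}-3\frac{t^{2}}{p}+6\sqrt{2p}-7$. Taking the maximum with $t^{3}/p$ and $9t-10$ finishes the proof.

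The main obstacle is bookkeeping rather than ideas: one has to track the ceilings and floors hidden inside $D_{G,K}$, and inside the chosen integer $K$, carefully enough to be sure that every term absent from the clean statement has the right sign, so that truncating to the closed form is legitimate; and one has to stay inside the window $K<6t-4$ where the polynomial form of $D_{G,K}$ is exact. Outside that window the closed form mis-counts, and the argument would need a smaller (suboptimal) $K$ --- but there the middle expression is no longer the active term in the maximum, so nothing is lost.
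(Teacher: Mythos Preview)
Your proposal is correct and follows essentially the same approach as the paper: invoke the GEMM-only version of the Claim, use the closed form $D_{G,K}=\frac{(K-7)(K^{2}+10K+16)}{2916}$ valid for $K<6t-4$, optimize in $K$ to find the asymptotic maximizer $K\approx 9\sqrt{2p}$, substitute, and discard the nonnegative lower-order terms. Your treatment is in fact slightly more explicit than the paper's own---you write down the exact critical point $K_{0}=\sqrt{162p+19}-1$ and address the integer rounding of $K$---but the route is identical.
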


As $K_{max} < 6t-4$, the ALAP height in $E_{2}$ is a non-decreasing function of
$K$. Thus, the maximum ALAP height of $E_{2}$ is located at $K = K_{max}$.  It
turns out that the maximum ALAP height in $E_{2}$ (obtained with Tables
\ref{lowerHeightALAP}, \ref{upperHeightALAP} in Section \ref{sec:alap}) is
asymptotically equal to the number of processing units available. In
particular, we can assume $\alpha > \epsilon$ for some fixed parameter
$\epsilon > 0$, for this bound to be a non-negligible improvement over the
naive one (that is we assume the size of $E_{2}$ is not negligible).

Then, this result improves the naive bound with a term $6\sqrt{2p} +
\mathcal{O}(1)$, that does not depend on the size $t$ of the problem, which is
quite a surprising result at first glance.

Let us propose an explanation for this phenomenon. As $K_{max} < 6t-4$, the
number of tasks in $E_{2}$ given in Equation~\ref{sizeE2} only depends on $K_{max}$, not
on $t$. As the gain from the naive bound comes from the fact that $E_{2}$
cannot be well parallelized (as the limiting factor is considered to be the
critical path there), this results in a gain which asymptotically does not
depend on $t$ (the negligible terms are due to the fact that we only considered
GEMMs here). 

Also, with the naive bound, we knew that at least $p = 0.11t^{2}$ processing
units were necessary to reach the critical path. With our new bound, we know
that we need at least $p = 0.185t^{2}$ processing units to reach it.
In other words, the naive bound is better when $\alpha > 0.186$, which justifies the previous assumption.

\section{Analysis of the tightness of the lower bound}
\label{sec:tightness}

The bound exhibited in the previous section improves the previously known ones. 
This raises a question: can we still improve it? That is, is the bound tight?
Here we analyze different schedules and compare their performance with the upper bound on
performance that we derive from our lower bound on time.

In this section, we study some existing schedules. Since our
execution model is theoretical (assuming that POTRFs take 1 unit of time,
TRSMs and SYRKs 3 units, and GEMMs 6) we simulate these schedules with
the same assumptions in order to build a consistent framework for
comparison. Therefore this is a theoretical study.

We focus on three different schedules:

\begin{itemize}
\item the right-looking Cholesky algorithm with multithreaded BLAS, this schedule is implemented in LAPACK for example;
\item a schedule from Kurzak et al. described in~\cite{journals/concurrency/KurzakLDB10};
\item the ALAP schedule mentioned earlier with a list scheduling.
\end{itemize}

The LAPACK schedule is the right-looking Cholesky algorithm with multithreaded BLAS. 
This boils down to synchronizing all processing units
at the
end of every loop in the Cholesky factorization (Algorithm \ref{alg1}, Section
\ref{sec:assumptions}). More precisely, one processing unit executes the POTRF,
while all the other ones wait. When it has finished, they all execute TRSMs if
some are available, and wait otherwise. Then, they execute the SYRKs and the
GEMMs the same way. As a result, the LAPACK schedule suffers from huge
synchronization needs, and therefore should result in quite poor performance
overall. This is bulk synchronous parallelism or the fork-join approach.

The schedule described in~\cite{journals/concurrency/KurzakLDB10} is a variant
of the left-looking Cholesky factorization: it assigns rows to the processing
units, which execute their assigned tasks as soon as possible.

We also consider a schedule based on the ALAP heuristic. Therefore we schedule
the task from the end to the start using a list schedule and priority policy
based on the (ALAP) critical path of a task.

A comparison of the speedups is plotted in Figure~\ref{fig1} for a Cholesky
factorization with 40 tiles. The LAPACK schedule shows quite poor performance
as expected, and the two others schedule performs much better. However, the gap
between their performance curve and the upper bound is quite close. 

The horizontal black curve represent the critical path bound. No schedule can
execute faster than the critical path. For $t=40$, the critical path is $350$.
We see that the green curve reaches the critical path at $p=275$ processing
units.  This means that any schedule which completes in the critical path time
has to have at least $p=275$ processing units.  We see that the red curve
reaches the critical path at $p=343$ processing units.  This means that the
ALAP schedule completes in the critical path time with $p=343$ processing
units.

\begin{figure}[!t]
\centering
\includegraphics[width=3in]{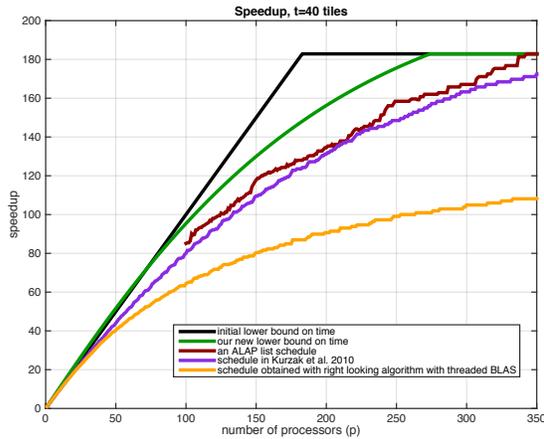}
\caption{
Speedup for a tiled Cholesky factorization with t=40 tiles.  We plot two upper
bounds (black and green curves) on the set of speedups obtained for all
schedules. And we plot three speedups (red, purple and yellow curves) obtained
by three schedules.
}

\label{fig1}
\end{figure}

\section{Conclusion}

We have analyzed the tiled Cholesky factorization and improved existing lower
bounds on the execution time of any schedule, with a technique that benefits
from the structure of dependence graph of the tiled Cholesky factorization. 

We took advantage of our observation that the tiled Cholesky factorization is
much better behaved when we schedule it with an ALAP (As Late As Possible)
heuristic than an ASAP (As Soon As Possible) heuristic. 

We believe that our theoretical results will help practical scheduling studies
of the tiled Cholesky factorization. Indeed, our results enable to better
characterize the quality of a practical schedule with respect to an optimal
schedule.

We also believe that our technique is generalizable to many tile algorithms, in
particular LU and QR. It is clear that many linear algebra operations would
benefit from the ALAP scheduling strategy.  Also, we can easily change the
weight of the tasks in our study to better represent the time of the kernels
(as opposed to the number of flops of the kernels) on a given architecture.

There are two questions left open by our work. First we do not have satisfying
closed form formula for schedules on $p$ processors. Indeed, in
Section~\ref{sec:tightness}, we relied on simulation to plot the speedup of an
ALAP schedule and of the schedule from Kurzak et
al.~\cite{journals/concurrency/KurzakLDB10}. We do not have closed form formula
for these. Also, while we made significant progress, there is still a gap between our lower and upper bounds
and it seems worthwhile to further our work and close this gap
in an asymptotic sense.

\end{document}